\def\ps@pprintTitle{%
 \let\@oddhead\@empty
 \let\@evenhead\@empty
 \def\@oddfoot{}%
 \let\@evenfoot\@oddfoot}
\newtheorem{thm}{Theorem}
\newtheorem{lem}[thm]{Lemma}
\newtheorem{prop}[thm]{Proposition}
\theoremstyle{definition}
\newtheorem{myDef}[thm]{Definition}
\newcommand{\textcite}[1]{\citeauthor*{#1}~\cite{#1}}
\title{Weighted automata are compact and actively learnable}
\author[1,2]{Artem Kaznatcheev}
\author[3]{Prakash Panangaden}
\address[1]{Department of Biology, University of Pennsylvania, Philadelphia, USA}
\address[2]{Department of Computer Science, University of Oxford, Oxford, UK}
\address[3]{School of Computer Science, McGill University, Montreal, Canada}
\begin{document}

\begin{abstract}
    We show that weighted automata over the field of two elements can be exponentially more compact than non-deterministic finite state automata. 
    To show this, we combine ideas from automata theory and communication complexity. 
    However, weighted automata are also efficiently learnable in Angluin's minimal adequate teacher model in a number of queries that is polynomial in the size of the minimal weighted automaton. 
    We include an algorithm for learning WAs over any field based on a linear algebraic generalization of the Angluin-Schapire algorithm.
    Together, this produces a surprising result: weighted automata over fields are structured enough that even though they can be very compact, they are still efficiently learnable.
\end{abstract}

\maketitle

\section{Introduction}

Weighted automata (WAs) are an enriched model of finite state machines and define a natural representation of free monoids.  
They have received a lot of interest in the learning community because they provide an interesting way to represent and analyze sequence data, such as music~\cite{MMW09} or text and speech processing~\cite{MPR08}. 
\textcite{M09} provides a nice survey of algorithms related to weighted automata.

In this paper, we aim to expand the theoretical results known about the representational power and learnability of WAs.  We show that WAs over $\mathbb{Z}_2$ (WA2s) -- which can be viewed as word recognizers for regular languages in a natural way -- can be exponentially more compact than non-deterministic finite state automata (NFAs) and yet learnable in Angluin's~\cite{A87} queries and counter-examples (or minimal adequate teacher) model in a number of membership queries and counter-example queries that is polynomial in the size of the minimal weighted automaton.

With Theorem~\ref{thm:sep}, we show that there exists a family of languages where the minimal WA2s are exponentially smaller than the smallest NFAs. 
Unfortunately, in Theorem~\ref{thm:smallNFA} we show that there also exists an exponential separation in the other direction. 
This shows that one can sometimes, but not always, get a significantly more compact representation by using WAs. 
However, the compactness result is still interesting because there are efficient algorithms for minimizing WAs~\cite{WAmin} whereas finding a minimal NFA is PSPACE-complete~\cite{MS72}.

This makes weighted automata compact yet -- unlike NFAs~\cite{AK95,DLT04} -- structured enough to be actively learnable in Angluin's minimal-adequate teacher model.
In Section~\ref{sec:algo}, we show how to extend the Angluin-Schapire algorithm~\cite{A87,S91} to weighted automata over any field. 
As such, we show that although WAs can be exponentially smaller than NFAs (and thus also DFAs), they still have a structure that we can exploit for efficient learning. 
Since weighted automata correspond more closely to popular models like POMDPs and probabilistic automata~\cite{CT04}, this might open new avenues for learning algorithms of those representations.

\section{Formal background}
\label{sec:genAut}

\subsection{Finite state automata}

\begin{myDef}
	Given a fixed alphabet $\Sigma$ and finite dimensional vector space $\mathbb{F}^n$, a weighted automaton $A$ over $\mathbb{F}$ of size $n$ is given by:
	
\begin{equation}
M = \langle \alpha, \omega \in \mathbb{F}^n, \{M^\sigma \in \mathbb{F}^{n \times n} | \sigma \in \Sigma \} \rangle
\end{equation}
where $\alpha$ is the initial state, $\omega$ is a final measurement (or final state), and for each $\sigma \in \Sigma$ we have a corresponding transition matrix $M^\sigma$.  The function recognized by this automaton is given by:

\begin{equation}
	f_M(\sigma_1...\sigma_m) = \alpha^T M^{\sigma_1} \ldots M^{\sigma_m} \omega
\end{equation}
\label{def:WAgen}
\end{myDef}

When dealing with automata, it is useful to adapt a general matrix representation of the function they recognize:

\begin{myDef}
	Given a function $f: \; \Sigma^* \rightarrow \mathbb{F}$  the \emph{Hankel matrix} $H_f: \Sigma^* \times \Sigma^* \rightarrow \mathbb{F}$ of $f$ is: $H_f(u,v) = f(uv)$.
\end{myDef}

We will also talk about the \emph{restricted Hankel matrix} $H_f|_n: \Sigma^n \times \Sigma^n \rightarrow \mathbb{F}$ of $f$ to strings of length $n$.

The Hankel matrix allows us to come to grips with weighted automata and their size:

\begin{prop}[\cite{CP71,F74}]
	$\mathrm{rank}_\mathbb{F}(H_f) \leq n$ if and only if there exists a weighted automaton $A$ over $\mathbb{F}$ of size $n$ such that $f_A = f$.
	\label{thm:WAsize}
\end{prop}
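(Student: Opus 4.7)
The plan is to establish both directions of the equivalence by exploiting the bilinear structure $f(uv) = \alpha^T M_u M_v \omega$, where $M_u$ abbreviates $M_{\sigma_1}\cdots M_{\sigma_k}$ for $u = \sigma_1\cdots\sigma_k$.

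For the forward direction ($\Leftarrow$), I would assume $f = f_A$ for some WA $A$ of size $n$ and simply exhibit a rank-$n$ factorization of $H_f$. Define maps $\phi,\psi : \Sigma^* \to \mathbb{F}^n$ by $\phi(u) = M_u^T \alpha$ and $\psi(v) = M_v \omega$. Then $H_f(u,v) = f(uv) = \phi(u)^T \psi(v)$, so assembling the $\phi(u)$ as rows of a matrix $\Phi$ and the $\psi(v)$ as columns of a matrix $\Psi$ yields $H_f = \Phi \Psi$ with inner dimension $n$, whence $\mathrm{rank}_\mathbb{F}(H_f) \leq n$.

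For the harder direction ($\Rightarrow$), assume $\mathrm{rank}_\mathbb{F}(H_f) = r \leq n$; it suffices to build a WA of size $r$ since padding with zero rows and columns extends it to size $n$. Writing $h_u \in \mathbb{F}^{\Sigma^*}$ for the $u$-th row, $h_u(v) = f(uv)$, let $V = \mathrm{span}\{h_u : u \in \Sigma^*\}$, a subspace of dimension $r$ by the rank assumption. The central structural observation is that the right-shift operators $T_\sigma$ defined by $(T_\sigma h)(v) = h(\sigma v)$ satisfy $T_\sigma h_u = h_{u\sigma}$, so each $T_\sigma$ sends the spanning set of $V$ back into $V$ and therefore restricts to a linear endomorphism of $V$ represented, in any chosen basis, by a matrix $M_\sigma \in \mathbb{F}^{r \times r}$. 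Taking $\alpha$ to be the coordinate vector of $h_\epsilon$ and $\omega$ to be the vector of evaluations of the basis at the empty string, one checks inductively that $\alpha^T M_{\sigma_1} \cdots M_{\sigma_m} \omega = h_{\sigma_1 \cdots \sigma_m}(\epsilon) = f(\sigma_1 \cdots \sigma_m)$, possibly after transposing matrices so that the order of multiplication matches Definition~\ref{def:WAgen}.

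The main obstacle is the backward direction: verifying that the shift operators genuinely descend to well-defined operators on the finite-dimensional row space (so that the matrices $M_\sigma$ exist at all), and tracking the row/column and transpose conventions carefully so the resulting product has precisely the shape in Definition~\ref{def:WAgen}. Once the invariance of $V$ under each $T_\sigma$ is in hand, the rest is bookkeeping around the chosen basis.
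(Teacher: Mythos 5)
Your argument is correct: the factorization $H_f(u,v)=\phi(u)^T\psi(v)$ gives the easy direction, and the invariance of the finite-dimensional row space under the shift operators $T_\sigma$ (which is immediate, since each $T_\sigma$ is a globally defined linear map sending the spanning set $\{h_u\}$ into $\{h_{u\sigma}\}\subseteq V$, so the ``well-definedness'' worry you flag is not actually an obstacle) gives the converse, with only the transpose/ordering bookkeeping you already anticipate. The paper itself offers no proof of this proposition --- it is quoted from \cite{CP71,F74} --- and your argument is precisely the standard Carlyle--Paz/Fliess construction those references contain.
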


If we are going to study weighted automata over $\mathbb{Z}_2$  (WA2) and non-deterministic finite state automata (NFA) together then it is best to express them in a common framework. 
To do this, we will define a generic finite state automaton (Definition~\ref{def:FSA}) and then see how augmenting this model with different acceptance criteria can produce NFAs (Definition~\ref{def:NDFSA}) or WA2s (Definition~\ref{def:WFSA}), or restricting the kinds of transitions can produce deterministic finite-state automata (DFA; Definition~\ref{def:DFSA}).

\begin{myDef}
	A \emph{finite state automaton} (FSA) is a tuple 
	$A = \langle Q, \Sigma, \delta: Q \times \Sigma \rightarrow 2^Q, S \subseteq Q, F \subseteq Q \rangle$
	where $Q$ is a finite set of states, $\Sigma$ is a finite alphabet, $\delta$ is the transition function, $S$ is a set of starting states, and $F$ is a set of final states.
	The size $|A|$ of the automaton is the number of states $|Q|$. 
	\label{def:FSA}
\end{myDef}

\begin{myDef}
	The dynamics of an FSA $A$ are defined by looking at $\mathrm{paths}_A: \Sigma^* \rightarrow 2^{Q^*}$
	where for $p \in Q^*$, $w \in \Sigma^*$, $q,q' \in Q$, and $a \in \Sigma$ we have the recursive definition:
	\begin{itemize}
		\item $\mathrm{paths}_A(\epsilon) = S$; and 
		\item $pqq' \in \mathrm{paths}_A(wa)$ if $pq \in \mathrm{paths}_A(w)$ 
		and $q' \in \delta(q,a)$.
	\end{itemize}
	We say that a \emph{path is accepting} if it ends in $F$, or formally: 
	$\mathrm{apaths}_A(w) \subseteq \mathrm{paths}_A(w)$ where $pq \in \mathrm{apaths}_A(w)$ if $q \in F$.
	\label{def:paths}
\end{myDef}

It will also be useful to have the following two refinements of paths:

\begin{myDef}
	Given an FSA $A$ and a state $q \in Q$ we say that a word $w \in \mathrm{past}(q)$ if $pq \in \mathrm{paths}(w)$ for some $p \in Q^*$.
\end{myDef}

In other words, $\mathrm{past}(q)$ is the set of all words that lead to $q$. In a similar vein, we can define:

\begin{myDef}
	Given an FSA $A$ and a state $q \in Q$ we say that $w \in \mathrm{future}(q)$ if $\exists v \in \mathrm{past}(q) \; p,r \in Q^* \; \mathrm{s.t} \; pqr \in \mathrm{apaths}(vw)$.
\end{myDef}

In other words, $\mathrm{future}(q)$ is the set of all words that lead from a reachable state $q$ to a state in $F$. 

Together, Definitions~\ref{def:FSA} and \ref{def:paths} specify a generic finite state automaton and how it runs.
What remains is how the automaton produces its corresponding recognized language.
This requires giving an acceptance criterion or membership criterion for the corresponding language.
Given an FSA $A$, we can get the traditional language $L^\mathrm{NFA}(A)$ recognized by a non-deterministic finite automaton (NFA) as:

\begin{myDef}
    An FSA $A$ is said to \emph{NFA-recognize} a language $L^\mathrm{NFA}(A)$ if
    \begin{equation}
        w \in L^\mathrm{NFA}(A) \iff |\mathrm{aparths}_A(w)| \geq 1.
    \end{equation}
    \label{def:NDFSA}
\end{myDef}

For shorthand -- and consistency with traditional nomenclature -- we say that an NFA $A$ recognizes language $L_A$ to means that an FSA $A$ NFA-recognizes a language $L$.
The reason for the unwieldy term ``NFA-recognize'' is because we want to provide a similar definition for weighted automata over $\mathbb{Z}_2$:

\begin{myDef}
    An FSA $A$ is said to \emph{WA2-recognize} a language $L^\mathrm{WA2}(A)$ if
    \begin{equation}
        w \in L^\mathrm{WA2}(A) \iff |\mathrm{apaths}_A(w)| = 1 \mod 2.
    \end{equation}
    \label{def:WFSA}
\end{myDef}

This allows us to give a machine view of the matrix- and function-based Definition~\ref{def:WAgen} of weighted automata:

\begin{prop}
Given a weighted automaton $M$ over $\mathbb{Z}_2$ of size $n$ computing the function $f_M$:
\begin{equation}
M = \langle \alpha, \omega, \{M^\sigma \;|\; \sigma \in \Sigma \}
\end{equation}

Let:

\begin{align}
    Q & = \{1,...,n\} \\
    \delta(q,\sigma) & = \{r \;|\; M^\sigma_{q,r} = 1\} \\
    S & = \{r \;|\; \alpha_r = 1\} \\
    F & = \{r \;|\; \omega_r = 1\}
\end{align}
\noindent then the FSA $A = \langle Q, \Sigma, \delta, S, F\rangle$ WA2-recognized $L$ if
\begin{equation}
    w \in L \iff f(w) = 1
\end{equation}
\label{prop:WAFun2Lang}
\end{prop}

\noindent This transition between linear algebraic and machine views of weighted automata is standard~\cite{WAbook}, but we include a proof for convenience:

\begin{proof}
    Note that it doesn't matter when we switch to mod 2: the matrix multiplication in Definition~\ref{def:WAgen} can be done over $\mathbb{R}$ until we multiply by the final measurement vector. 
    Since the transition function $\delta(\cdot,\sigma)$ is given by the matrix $M^\sigma$, we can just use matrix multiplication. 
    Multiplying $\alpha$ by transition matrices against is the same thing as counting the number of paths from $S$.
    Multiplication by $\omega$ adds up the paths that lead to final states $F$ and so computes $|\mathrm{apaths}(w)|$.
    Finally taking the mod 2 that we deferred completes our computation.
\end{proof}

As with NFAs, Proposition~\ref{prop:WAFun2Lang} allows us to shorten the unwieldy language of ``FSA $A$ WA2-recognizes the language $L$'' by the shorter and more traditional ``WA2 $A$ recognizes the language $L$``.

Note also that by the same argument as Proposition~\ref{prop:WAFun2Lang}, we could view the NFA from Definition~\ref{def:NDFSA} as a weighted automaton that uses the boolean semiring (`or' for addition, and `and' for multiplication) instead of over a field as in Definition~\ref{def:WAgen}.
In other words, NFAs can also be thought of as weighted automata over the boolean semiring. 
This is why when we discuss weighted automata in this article, we focus only on WAs over fields (and do not consider the more general setting of WAs over rings).

Finally, let us make the familiar definition of deterministic finite state automata by putting restrictions on $\delta$ and $S$:

\begin{myDef}
	An FSA $A$ is a \emph{deterministic finite automaton} (DFA) if it respects the restriction of a single start state ($|S| = 1$) and deterministic transitions: 
	\begin{equation}
	    \forall q \in Q, \; a \in \Sigma \; |\delta(q,a)| = 1;
	    \label{eq:deterministic}
	\end{equation}
	
	The DFA $A$ is said to recognize a language $L_A$ if
	
	\begin{equation}
		w \in L_A \subseteq \Sigma^* \iff |\mathrm{apaths}(w)| = 1.
	\end{equation}
	\label{def:DFSA}
\end{myDef}

Note that the DFA restrictions of a single start state and determinism (Equation~\ref{eq:deterministic}) imply that given a DFA $A$, any word $w$ defines only one path (i.e., $\forall w \in \Sigma^* \;\; |\mathrm{paths}_A(w)| = 1$) and this path is either accepting or not. 
This means that a DFA is also an NFA, and WA2. 

\subsection{Tools from communication complexity}

It will be useful to observe a link between the Hankel matrix and a concept from communication complexity:

\begin{myDef}
	The \emph{1-monochromatic rectangle covering} of a function $f: \{0,1\}^n \times \{0,1\}^n \rightarrow \{0,1\}$ is the smallest number $\chi_1(f)$ of pairs of sets (called rectangles) $A_i,B_i \subseteq \{0,1\}^n$ 
	for $1 \leq i \leq \chi_1(f)$ such that: 
	\begin{enumerate}
	\item for every $(x,y) \in A_i \times B_i$ we have $f(x,y) = 1$ (i.e., $A_i \times B_i$ is 1-monochromatic), and
	\item for every $(x,y) \in f^{-1}(1)$ we have at least one index $i \in \{1, ..., \chi_1(f)\}$ such that $(x,y) \in A_i \times B_i$.
	\end{enumerate}
	\label{def:rect}
\end{myDef}

Based on formalizing the argument in Hromkovi{\v{c}} and Schnitger~\cite{HS08} that views NFAs as a non-deterministic one-way communication protocol where the message sent by the first computer to the second corresponds to the state of the NFA, 
we can show that the 1-monochromatic rectangle covering (which is a kind of non-deterministic one-way communication protocol) lower bounds the size of NFAs:

\begin{prop}
	$|\mathrm{NFA}(f)| \geq \chi_1(H_f|_n)$ for any $n \in \mathbb{N}$.
	\label{prop:NFAchiLow}
\end{prop}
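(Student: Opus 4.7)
The strategy is the standard ``rectangle cover from states'' reduction: given any NFA $A$ recognizing $f$, I construct a 1-monochromatic rectangle cover of $H_f|_n$ with at most $|A|$ rectangles, which immediately yields $\chi_1(H_f|_n) \leq |A|$, and applying this to the minimal NFA for $f$ gives the desired inequality.

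First, I would pick any NFA $A = \langle Q, \Sigma, \delta, S, F\rangle$ with $L_A = f^{-1}(1)$. For each state $q \in Q$ define
\[
R_q \;=\; \bigl(\mathrm{past}(q) \cap \Sigma^n\bigr) \;\times\; \bigl(\mathrm{future}(q) \cap \Sigma^n\bigr),
\]
using the refinements of paths introduced just before this subsection. I claim that the family $\{R_q\}_{q \in Q}$ is a 1-monochromatic rectangle cover of $H_f|_n$.

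Next I would verify the two conditions of Definition~\ref{def:rect}. For monochromaticity, take any $(u,v) \in R_q$: there is a run on $u$ ending at $q$ (by $u \in \mathrm{past}(q)$) and a run starting at $q$ and ending in $F$ on $v$ (by $v \in \mathrm{future}(q)$, after unfolding the definition and using that $q$ itself is reachable). Concatenating these runs produces an accepting path for $uv$, so $f(uv) = 1$. For coverage, take any $(u,v) \in \Sigma^n \times \Sigma^n$ with $H_f|_n(u,v) = 1$: then $uv \in L_A$, so there is an accepting path $p = p_0 p_1 \cdots p_{2n}$ with $p_0 \in S$ and $p_{2n} \in F$. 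Setting $q = p_n$ gives $u \in \mathrm{past}(q)$ and $v \in \mathrm{future}(q)$, so $(u,v) \in R_q$.

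The only genuinely delicate point, and the one I would spend the most care on, is the $\mathrm{future}(q)$ side: its definition requires the existence of some $v' \in \mathrm{past}(q)$ as witness, so I need to confirm that $\mathrm{past}(q)$ is nonempty for every $q$ that actually arises as a midpoint of an accepting path — which is automatic, since the prefix of that same accepting path witnesses it. Once that point is settled the bookkeeping is routine, and combining the two verifications yields $\chi_1(H_f|_n) \leq |Q| = |A|$; taking $A$ to be a minimal NFA for $f$ gives $\chi_1(H_f|_n) \leq |\mathrm{NFA}(f)|$, as claimed.
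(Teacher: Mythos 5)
Your proposal is correct and follows essentially the same route as the paper: both construct the rectangles $\mathrm{past}(q)\times\mathrm{future}(q)$ for each state $q$ of a (minimal) NFA, verify 1-monochromaticity by concatenating a run on $u$ into $q$ with a run on $v$ from $q$ into $F$, and verify coverage by splitting an accepting path for $uv$ at its midpoint state. The only difference is cosmetic — you intersect the rectangles with $\Sigma^n$ up front, while the paper covers the full Hankel matrix and then restricts — and your explicit care about the witness in the definition of $\mathrm{future}(q)$ is a point the paper glosses over but handles identically in spirit.
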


\begin{proof}
	Let $A$ be a minimal $\mathrm{NFA}$ recognizing $f$.
	For each state $q \in Q$ define $A_q = \mathrm{past}(q)$ and $B_q = \mathrm{future}(q)$, by the definition of $\mathrm{future}(q)$ for any $u \in A_q$ and $v \in B_q$ we have $f(uv) = 1$. 
	Therefore, the $\{A_q,B_q\}_{q \in Q}$ are 1-monochromatic rectangles. 
	Now, consider any $uv \in f^{-1}(1)$, say that $q \in Q_u$ if $\exists p \in Q^*$ such that $pq \in \mathrm{paths}(u)$. 
	Since $f(uv) = 1$, there must be at least one $q \in Q_u$ such that $v \in \mathrm{future}(q) = B_q$. 
	Therefore, the $\{A_q,B_q\}_{q \in Q}$ are a cover of the whole Hankel matrix, and hence any restricted submatrix is also covered.
\end{proof}

Another useful tool for proving lower bounds in communication complexity is:

\begin{myDef}
	The \emph{discrepancy} of a function $f: \{0,1\}^n \times \{0,1\}^n \rightarrow \{0,1\}$ is:
	
	\begin{equation}
		\mathrm{disc}(f) = \max_{A,B \subseteq \{0,1\}^n} \frac{1}{2^{2n}}\; \Bigg|\sum_{x \in A, y \in B} (-1)^{f(x,y)}\;\Bigg|
	\end{equation}
	\label{def:disc}
\end{myDef}

Definitions~\ref{def:rect} and~\ref{def:disc} relate nicely to each other by an extension of Lemma~13.13 from~\textcite{AB09}:

\begin{lem}
	$\chi_1(f) \geq \frac{|f^{-1}(1)|}{2^{2n}\mathrm{disc}(f)}$
	\label{lem:disc}
\end{lem}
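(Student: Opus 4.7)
The plan is to use the optimal 1-monochromatic cover to bound $|f^{-1}(1)|$ from above, and to use the discrepancy to bound the size of each rectangle in that cover from above; combining these two bounds will immediately yield the inequality.

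First, I would fix a minimum cover: let $k = \chi_1(f)$ and let $\{A_i \times B_i\}_{i=1}^k$ be 1-monochromatic rectangles witnessing Definition~\ref{def:rect}. By the union (covering) property of the cover, $f^{-1}(1) \subseteq \bigcup_{i=1}^k A_i \times B_i$, so by the union bound
\begin{equation*}
|f^{-1}(1)| \;\leq\; \sum_{i=1}^k |A_i|\,|B_i|.
\end{equation*}
Note that the rectangles need not be disjoint, which is exactly why we must accept a union-bound inequality rather than an equality here.

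Next, I would exploit the fact that each $A_i \times B_i$ is 1-monochromatic: on every $(x,y) \in A_i \times B_i$ the value $(-1)^{f(x,y)}$ equals $-1$, so
\begin{equation*}
\Bigl|\sum_{x \in A_i,\, y \in B_i} (-1)^{f(x,y)}\Bigr| \;=\; |A_i|\,|B_i|.
\end{equation*}
Substituting $A_i, B_i$ into the maximization defining $\mathrm{disc}(f)$ in Definition~\ref{def:disc} then gives $|A_i|\,|B_i| \leq 2^{2n}\,\mathrm{disc}(f)$ for each $i$.

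Finally, I would chain the two bounds:
\begin{equation*}
|f^{-1}(1)| \;\leq\; \sum_{i=1}^k |A_i|\,|B_i| \;\leq\; k \cdot 2^{2n}\,\mathrm{disc}(f) \;=\; \chi_1(f) \cdot 2^{2n}\,\mathrm{disc}(f),
\end{equation*}
and rearrange to obtain the statement. The proof is essentially a direct unpacking of the definitions; there is no genuine obstacle, only the minor care needed in step one to use a union bound rather than treating the cover as a partition.
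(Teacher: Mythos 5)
Your proof is correct and follows essentially the same approach as the paper: both arguments combine the covering property with the fact that 1-monochromaticity forces $\bigl|\sum_{x\in A,\,y\in B}(-1)^{f(x,y)}\bigr| = |A|\,|B| \leq 2^{2n}\,\mathrm{disc}(f)$. The only (immaterial) difference is that you apply the discrepancy bound to every rectangle and sum, whereas the paper first uses an averaging argument to select a single rectangle with $|A|\,|B| \geq |f^{-1}(1)|/\chi_1(f)$ and applies the bound once.
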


\begin{proof}
	Since all the ones in our function can be covered by $\chi_1(f)$ squares, and a total of $|f^{-1}(1)|$ ones need to be covered, there must be at least one monochromatic rectangle $A \times B$ that covers the average number of ones or more. 
	This means that $|A||B| \geq |f^{-1}(1)|/\chi_1(f)$. 
	Now, since the discrepancy is a max over rectangles, we can pick $A \times B$ to lower bound it:

	\begin{align}
		\mathrm{disc}(f) & \geq \frac{1}{2^{2n}} |\sum_{x \in A, y \in B} (-1)^{f(x,y)}| \\
		& \geq \frac{1}{2^{2n}} |\sum_{x \in A, y \in B} -1| \\
		& \geq \frac{|A||B|}{2^{2n}} \\
		& \geq \frac{|f^{-1}(1)|}{2^{2n} \chi_1(f)}
	\end{align}
	
	\noindent where the second line follows from the first because the rectangle is 1-monochromatic. The last line can be rearranged to complete the proof.
\end{proof}

\section{Size of NFAs and WA2s}

We are interested in the following question: given a regular language $L$, what is the size of the smallest automaton $A$ with $L_A = L$? 
In particular, we will define $|\mathrm{NFA}(L)|$ to be the largest integer such that for any $\mathrm{NFA}$ $A$, if $L_A = L$ then $|A| \geq |\mathrm{NFA}(L)|$ and similarly for $|\mathrm{DFA}(L)|$, 
and $|\mathrm{WA2}(L)|$. 

\subsection{WA2s can be exponentially smaller than NFAs}

The gap between $|\mathrm{NFA}(L)|$  and $|\mathrm{WA2}(L)|$ can be exponentially large. Technically, this means that:

\begin{thm}
	There exists a family of regular languages $\{L_n\}$ such that 
	$|\mathrm{NFA}(L_n)| \in 2^{\Omega({|\mathrm{WA2}(L_n)|})}$
	\label{thm:sep}
\end{thm}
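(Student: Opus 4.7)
The plan is to use the inner product modulo two from communication complexity as the source of a family $\{L_n\}$ whose Hankel matrix has low $\mathbb{Z}_2$-rank (giving a small WA2 by Proposition~\ref{thm:WAsize}) yet whose $1$-monochromatic covering number is exponential (forcing a large NFA by Proposition~\ref{prop:NFAchiLow} together with Lemma~\ref{lem:disc}).

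Concretely, over $\Sigma = \{0,1\}$ I would take
\[
L_n = \Bigl\{xy \in \{0,1\}^{2n} : \sum_{i=1}^{n} x_i y_i \equiv 1 \pmod 2\Bigr\},
\]
which is finite and hence regular. For the upper bound on $|\mathrm{WA2}(L_n)|$, I would analyze $\mathrm{rank}_{\mathbb{Z}_2}(H_{L_n})$ by splitting rows according to the length $k = |u|$: since $L_n$ contains only strings of length $2n$, rows with different $k$ have disjoint support, so their $\mathbb{Z}_2$-rank contributions add. For each fixed $k$, writing the concatenation $uv$ as a left block $x$ and a right block $y$ of length $n$ each, and expanding $\langle x, y \rangle$ bilinearly over $\mathbb{Z}_2$, decomposes the row into an inner product in one sub-block plus an inner product in the other plus an additive constant, a span of dimension $O(\min(k, 2n - k))$. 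Summing over $k = 0, \ldots, 2n$ gives $|\mathrm{WA2}(L_n)|$ polynomial in $n$.

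For the lower bound on $|\mathrm{NFA}(L_n)|$, I would apply Proposition~\ref{prop:NFAchiLow} to the restricted Hankel matrix at length $n$: by construction, $H_{L_n}|_n(u, v) = [\langle u, v \rangle \equiv 1 \pmod 2]$ is exactly the inner product communication matrix $IP_n$. A direct count gives $|IP_n^{-1}(1)| = 2^{2n-1} - 2^{n-1} = \Theta(2^{2n})$, and the classical discrepancy bound $\mathrm{disc}(IP_n) = O(2^{-n/2})$, via Lindsey's lemma (exploiting the Hadamard structure of the sign matrix $(-1)^{IP_n}$), feeds into Lemma~\ref{lem:disc} to yield $\chi_1(IP_n) = 2^{\Omega(n)}$. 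Thus $|\mathrm{NFA}(L_n)| \geq 2^{\Omega(n)}$, exponentially larger than the polynomial WA2 upper bound.

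The main obstacle will be the Hankel-rank bound: naively, a machine computing inner product must remember all $n$ bits of $x$ before reading $y$, suggesting exponentially many states, so extracting the polynomial bound requires carefully exploiting the bilinearity of inner product mod 2 and seeing that the Hankel row space decomposes into a small number of low-dimensional pieces. The discrepancy bound on $IP_n$, on the other hand, is classical and can simply be cited.
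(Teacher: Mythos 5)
Your proof architecture is the same as the paper's on the NFA side: lower-bounding $|\mathrm{NFA}(L_n)|$ by $\chi_1$ of the length-$n$ slice of the Hankel matrix (Proposition~\ref{prop:NFAchiLow}), counting $|(\wedge^{\otimes n})^{-1}(1)| = 2^{n-1}(2^n-1)$ (Proposition~\ref{prop:size}), citing $\mathrm{disc}(\wedge^{\otimes n}) \leq 2^{-n/2}$, and combining via Lemma~\ref{lem:disc} to get $\chi_1 \geq 2^{\Omega(n)}$ (Lemma~\ref{lem:wedge}, Proposition~\ref{lem:NFA}). That half is correct. The gap is on the WA2 side, and it comes from your choice of $L_n$ as the \emph{finite} language $\{xy \in \{0,1\}^{2n} : \langle x,y\rangle \equiv 1 \pmod 2\}$. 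Your own decomposition shows why this fails quantitatively: the row blocks of $H_{L_n}$ indexed by prefix length $k$ have disjoint column supports, so their ranks \emph{add}, and the block at length $k \leq n$ genuinely has rank $\Theta(k)$ (its rows are $g(v) + \sum_{i=1}^{k} u_i v_{n-k+i}$ for a fixed quadratic $g$, and pairwise sums of these rows over $\mathbb{Z}_2$ span the $k$ independent coordinate functionals $v \mapsto v_{n-k+i}$). Summing over $k$ gives $\mathrm{rank}_{\mathbb{Z}_2}(H_{L_n}) = \Theta(n^2)$ --- this is a matching lower bound, not just an upper bound. So $|\mathrm{WA2}(L_n)| = \Theta(n^2)$ while $|\mathrm{NFA}(L_n)| = 2^{\Theta(n)}$, which yields only $|\mathrm{NFA}(L_n)| \in 2^{\Theta(\sqrt{|\mathrm{WA2}(L_n)|})}$ and does not establish the stated $2^{\Omega(|\mathrm{WA2}(L_n)|)}$.

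The paper sidesteps this by deliberately \emph{not} pinning down $L_n$ outside the length-$2n$ slice: it only requires $L_n(xy) = x \wedge^{\otimes n} y$ for $x,y \in \{0,1\}^n$ (an ``inner-product kernel family''), which is all the NFA lower bound ever uses, and then exhibits an explicit $(n+2)$-state weighted automaton (Figure~\ref{fig:WAprod}): a chain $s, m_1, \ldots, m_n, f$ with self-loops on $s$ and $f$, entry to $m_1$ on a $1$, and exit from $m_n$ on a $1$, so that the accepting paths on $xy$ are in bijection with the indices $i$ with $x_i = y_i = 1$ and the parity of their number is exactly the inner product (Proposition~\ref{prop:WAprod}). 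The language of that automaton agrees with $\wedge^{\otimes n}$ on length $2n$ but accepts other words at other lengths, and that slack is precisely what keeps the WA2 size linear in $n$. To repair your argument, replace your finite $L_n$ by the language of such an automaton (or any language agreeing with the inner product on the length-$2n$ kernel whose Hankel rank is $O(n)$); the NFA lower bound is unaffected since it only reads the length-$n$ restriction.
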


To find our separating family of languages, we will look at the inner-product function:

\begin{myDef}
	The \emph{n-bit inner product} is a function $\wedge^{\otimes n} : \{0,1\}^n \times \{0,1\}^n \rightarrow \{0,1\}$
	acting on two bit strings $x = x_1..x_n \in \{0,1\}^n$ and $y = y_1...y_n \in \{0,1\}^n$ as
	$x \wedge^{\otimes n} y = \sum_{i =1}^n x_1\cdot y_1 \mod 2$
\end{myDef}

Sometimes, when the size of $x$ and $y$ is obvious, we will omit the $\otimes n$.
Note that the number of zeros and ones in $\wedge$ is well balanced.

\begin{prop}
	$|(\wedge^{\otimes n})^{-1}(1)| = 2^{n - 1}(2^n - 1)$
	\label{prop:size}
\end{prop}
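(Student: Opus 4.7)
The plan is to count by conditioning on $x$ and exploiting that, for each fixed nonzero $x$, the map $y \mapsto x \wedge y$ is a nontrivial $\mathbb{Z}_2$-linear functional on $\{0,1\}^n$, hence splits the $2^n$ choices of $y$ evenly between outputs $0$ and $1$.

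First, I would handle the degenerate case: if $x = 0$, then $x \wedge y = 0$ for every $y$, so the zero vector contributes nothing to $(\wedge^{\otimes n})^{-1}(1)$. Next, fix any $x \in \{0,1\}^n \setminus \{0\}$ and consider the functional $\varphi_x: \{0,1\}^n \to \{0,1\}$ defined by $\varphi_x(y) = \sum_i x_i y_i \bmod 2$. Because $x$ has at least one coordinate $x_j = 1$, the functional $\varphi_x$ is surjective onto $\{0,1\}$ (e.g., flipping $y_j$ flips the output), and its kernel is a subspace of $\mathbb{Z}_2^n$ of dimension $n-1$, so $|\ker \varphi_x| = 2^{n-1}$ and $|\varphi_x^{-1}(1)| = 2^n - 2^{n-1} = 2^{n-1}$.

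Finally, summing over all $x$:
\begin{equation}
|(\wedge^{\otimes n})^{-1}(1)| = \sum_{x \in \{0,1\}^n} |\varphi_x^{-1}(1)| = 0 + \sum_{x \neq 0} 2^{n-1} = (2^n - 1) \cdot 2^{n-1},
\end{equation}
which is exactly the claimed quantity.

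The only step that needs real argument is the claim that $\varphi_x$ is a nontrivial linear map over $\mathbb{Z}_2$ whenever $x \neq 0$; the rest is bookkeeping. There is no genuine obstacle here — the result is essentially the observation that the inner product mod $2$ is a balanced Boolean function on the nonzero inputs.
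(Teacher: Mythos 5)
Your proof is correct, but it takes a different route from the paper's. You fix $x$ and use the linear-algebraic fact that for $x \neq 0$ the map $y \mapsto \sum_i x_i y_i \bmod 2$ is a nontrivial $\mathbb{Z}_2$-linear functional, hence balanced with $|\varphi_x^{-1}(1)| = 2^{n-1}$; summing over the $2^n - 1$ nonzero $x$ gives the count directly. The paper instead argues combinatorially: it takes the set $D$ of pairs $(x,y)$ that agree in at least one coordinate, exhibits a parity-reversing involution on $D$ (flip both bits at the first position of agreement), concludes $\wedge^{\otimes n}$ is balanced on $D$, and then observes that the $2^n$ excluded pairs $(x,\bar{x})$ all map to $0$, so the answer is $|D|/2 = (2^{2n}-2^n)/2$. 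Both arguments are short and complete; yours is arguably cleaner because it leans on a standard fact about linear functionals over $\mathbb{Z}_2$ and avoids having to verify that the involution is well defined and parity-reversing, while the paper's bijection argument is purely combinatorial and would survive in settings where one does not want to invoke linearity. Either way the bookkeeping lands on $2^{n-1}(2^n - 1)$, so there is nothing to fix.
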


\begin{proof}
	Let $D = \{(x,y) | \exists i \in [n] \; \mathrm{s.t}\; x_i = y_i \}$ be the set of pairs of
	strings that overlap in at least one place. Now, consider a function $h$ defined on $D$ that given
	$(x,y)$ take the smallest index of overlap $i$ (i.e. for all $j < i, x_j \neq y_j$) and sends
	$x_i \rightarrow \bar{x_i}$ and $y_i \rightarrow \bar{y_i}$ this function is a bijection on $D$.
	However, note that if $\wedge^{\otimes n}(x,y) =  b$ then $\wedge^{\otimes n} h(x,y) = \bar{b}$.
	Thus, $\wedge$ has the same number of zeros and ones in $D$.
	
	The only pairs missing from $D$ are the ones of the form $(x,\bar{x})$ and there are $2^n$ such strings, so $|D| = 2^{2n} - 2^n$. 
	Finally, note that $x \wedge^{\otimes n} \bar{x} = 0$ thus $|(\wedge^{\otimes n})^{-1}(1)| = |D|/2$.
\end{proof}

\begin{lem}
	$\chi_1(\wedge^{\otimes n}) \geq 2^{n/2 -2}$
	\label{lem:wedge}
\end{lem}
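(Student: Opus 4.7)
The plan is to combine Lemma~\ref{lem:disc} with a standard upper bound on the discrepancy of the inner product function and the count of ones already supplied by Proposition~\ref{prop:size}. Lemma~\ref{lem:disc} tells us that $\chi_1(\wedge^{\otimes n}) \geq |(\wedge^{\otimes n})^{-1}(1)| / (2^{2n}\,\mathrm{disc}(\wedge^{\otimes n}))$, and Proposition~\ref{prop:size} gives $|(\wedge^{\otimes n})^{-1}(1)| = 2^{n-1}(2^n - 1) \geq 2^{2n-2}$, so the only remaining ingredient is the bound $\mathrm{disc}(\wedge^{\otimes n}) \leq 2^{-n/2}$, which is the communication-complexity content of Lindsey's Lemma.

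To prove this discrepancy bound I would argue matrix-theoretically. Consider the $\pm 1$ matrix $H \in \{-1,+1\}^{2^n \times 2^n}$ with entries $H_{x,y} = (-1)^{x \wedge^{\otimes n} y}$. Because inner product over $\mathbb{Z}_2$ makes $H$ a Hadamard matrix, we have $H^T H = 2^n I$. For any $A, B \subseteq \{0,1\}^n$ with indicator vectors $\mathbf{1}_A, \mathbf{1}_B$, Cauchy--Schwarz gives
\begin{equation}
\Bigl|\sum_{x \in A,\, y \in B}(-1)^{x\wedge^{\otimes n} y}\Bigr| = |\mathbf{1}_A^T H \mathbf{1}_B| \leq \|\mathbf{1}_A\|\,\|H\mathbf{1}_B\| = \sqrt{|A|}\sqrt{\mathbf{1}_B^T H^T H \mathbf{1}_B} = \sqrt{|A|\,|B|\,2^n}.
\end{equation}
Since $|A|,|B| \leq 2^n$, the right-hand side is at most $2^{3n/2}$, and dividing by $2^{2n}$ yields $\mathrm{disc}(\wedge^{\otimes n}) \leq 2^{-n/2}$.

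Plugging these two facts into Lemma~\ref{lem:disc} gives
\begin{equation}
\chi_1(\wedge^{\otimes n}) \geq \frac{2^{2n-2}}{2^{2n} \cdot 2^{-n/2}} = 2^{n/2 - 2},
\end{equation}
which is the claimed bound.

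The main (and essentially only nontrivial) step is the discrepancy estimate, since once that is in hand the rest is bookkeeping with the formula from Proposition~\ref{prop:size}. The only subtlety to watch is the conversion between the $\{0,1\}$-valued $f$ used in Definition~\ref{def:disc} and the $\pm 1$ Hadamard representation used in the spectral argument, but these agree exponent-by-exponent so the identification is immediate.
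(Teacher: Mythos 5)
Your proof is correct and follows essentially the same route as the paper: combine Lemma~\ref{lem:disc} with Proposition~\ref{prop:size} and the bound $\mathrm{disc}(\wedge^{\otimes n}) \leq 2^{-n/2}$. The only difference is that the paper simply cites this discrepancy bound (Example 13.16 of Arora--Barak), whereas you supply a self-contained proof of it via the Hadamard/Cauchy--Schwarz (Lindsey's Lemma) argument --- a nice addition, but not a different overall strategy, and your arithmetic checks out.
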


\begin{proof}
	Example 13.16 in~\cite{AB09} shows that $\mathrm{disc}(\wedge^{\otimes n}) \leq 2^{-n/2}$ which combined with Lemma~\ref{lem:disc} and Proposition~\ref{prop:size} gives us $\chi_1(\wedge^{\otimes n}) \geq \frac{2^{n - 1}(2^n - 1)2^{n/2}}{2^{2n}} \geq 2^{n/2 - 2}$. 
\end{proof}

The inner-product allows us to define a special class of language families with an important property:

\begin{myDef}
	A language family $\{L_n\}$ is called an \emph{inner-product kernel family} if:
	\begin{equation}
		\forall n \; \forall x,y \in \{0,1\}^n \quad L_n(xy) = x \wedge^{\otimes n} y
	\end{equation}
\end{myDef}

Note that the above definition places no restriction on how $L_n$ behaves on words of length other than $2n$, so there are many inner-product kernel families based on the many ways languages can behave outside the kernels.

\begin{prop}
	If $\{L_n\}$ is an inner-product kernel family then $|\mathrm{NFA}(L_n)| \geq 2^{n/2 - 2}$
	\label{lem:NFA}
\end{prop}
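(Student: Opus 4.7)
The plan is to chain together the three preceding results: Proposition~\ref{prop:NFAchiLow} bounds NFA size below by the 1-monochromatic rectangle cover of any restricted Hankel matrix, Lemma~\ref{lem:wedge} bounds the rectangle cover of the inner-product function from below, and the inner-product kernel condition exactly identifies the restricted Hankel matrix with the inner-product function.

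First, I would fix $n$ and look at the restricted Hankel matrix $H_{L_n}|_n \colon \Sigma^n \times \Sigma^n \to \{0,1\}$. Taking $\Sigma = \{0,1\}$ so that length-$n$ strings are in natural bijection with $\{0,1\}^n$, the inner-product kernel condition $L_n(xy) = x \wedge^{\otimes n} y$ for all $x,y \in \{0,1\}^n$ gives the pointwise equality
\begin{equation}
H_{L_n}|_n(x,y) = L_n(xy) = x \wedge^{\otimes n} y,
\end{equation}
so as functions on $\{0,1\}^n \times \{0,1\}^n$ we have $H_{L_n}|_n = \wedge^{\otimes n}$. In particular their 1-monochromatic rectangle covers coincide: $\chi_1(H_{L_n}|_n) = \chi_1(\wedge^{\otimes n})$.

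Next, I would apply Proposition~\ref{prop:NFAchiLow} with $f = L_n$ at this value of $n$ to obtain $|\mathrm{NFA}(L_n)| \geq \chi_1(H_{L_n}|_n)$, and then combine with Lemma~\ref{lem:wedge} to conclude
\begin{equation}
|\mathrm{NFA}(L_n)| \;\geq\; \chi_1(H_{L_n}|_n) \;=\; \chi_1(\wedge^{\otimes n}) \;\geq\; 2^{n/2 - 2}.
\end{equation}
There is essentially no obstacle: this is a bookkeeping assembly of the earlier lemmas. The only point that requires a moment of care is the identification step, namely observing that the kernel condition pins down the entire restricted Hankel submatrix at length $n$ (and not merely some scattered entries), so that the communication-complexity lower bound on $\chi_1(\wedge^{\otimes n})$ transfers verbatim.
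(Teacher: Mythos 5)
Your proposal is correct and follows exactly the same route as the paper's proof: identify the length-$n$ restricted Hankel matrix with $\wedge^{\otimes n}$ via the kernel condition, apply Proposition~\ref{prop:NFAchiLow}, and finish with Lemma~\ref{lem:wedge}. Your version just spells out the identification step slightly more explicitly than the paper does.
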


\begin{proof}
	We use the communication complexity techniques from Proposition~\ref{prop:NFAchiLow}. 
	We can use any finite submatrix of $H_L$ to lowerbound $|\mathrm{NFA}(L)|$. 
	In particular, if for $L_n$ we look at the submatrix of $H_L$ with rows and columns indexed by strings of length $n$ then this submatrix is the same as the matrix for $\wedge^{\otimes n}$. 
	Thus, $|\mathrm{NFA}(L_n)| \geq \chi_1(\wedge^{\otimes n}) \geq 2^{n/2 - 2}$ where the first inequality is an application of the Proposition~\ref{prop:NFAchiLow} lowerbound technique and the second inequality is from Lemma~\ref{lem:wedge}.
\end{proof}

\begin{figure}
	\center
\begin{tikzpicture}[->,>=stealth',shorten >=1pt,auto,node distance=1.5cm,    semithick]
  \tikzstyle{every state}=[fill=white,draw=black,text=black]

  \node[initial,state](S){$s$};
  \node[state](M1)[below of=S]{$m_1$};
  \node[state](M2)[right of=M1]{$m_2$};
  \node (Mdot)[right of=M2]{$\cdots$};
  \node[state](Mn)[right of=Mdot]{$m_n$};
  \node[state,accepting](F)[above of=Mn]{$f$};

  \path (S) edge              node {1} (M1)
            edge [loop right] node {0,1} (S)
        (M1) edge node {0,1} (M2)
        (M2) edge node {0,1} (Mdot)
        (Mdot) edge node {0,1} (Mn)
        (Mn) edge node {1} (F)
        (F) edge [loop left] node {0,1} (F)
            ;
\end{tikzpicture}
	\caption{A picture of the weighted automaton used to prove Theorem~\ref{thm:sep}.}
	\label{fig:WAprod}
\end{figure}
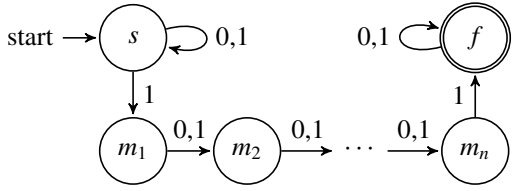

We finish the proof of Theorem~\ref{thm:sep} by noticing that the family of weighted automata in Figure~\ref{fig:WAprod} recognize languages in an inner-product kernel family but only have $n + 2$ states.
More formally:

\begin{prop}
Let $\text{WA}^\text{prod}_n$ be the weighted automaton in Figure~\ref{fig:WAprod}.
Given any $x,y \in \{0,1\}^n$: 
\begin{equation}
xy \in L_{\text{WA}^\text{prod}_n} \iff \sum_{i = 1}^n x_iy_i \mod 2 = 1.
\label{eq:propWAprod}
\end{equation}
\label{prop:WAprod}
\end{prop}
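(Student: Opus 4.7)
The plan is to enumerate all accepting paths of $\text{WA}^\text{prod}_n$ on an input word $w = xy$ of length $2n$, identify them bijectively with the coordinates $i \in \{1,\ldots,n\}$ where $x_i = y_i = 1$, and then invoke the WA2 acceptance criterion (Definition~\ref{def:WFSA}) to turn this count into the parity condition in Equation~\ref{eq:propWAprod}.

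First I would analyze the structure of paths in Figure~\ref{fig:WAprod}. The only accepting state is $f$, and the only way to reach $f$ from $s$ is to use the $s \to m_1$ edge, traverse the chain $m_1 \to m_2 \to \cdots \to m_n$, and then use the $m_n \to f$ edge. Since the chain has length $n-1$ and the two ``hop'' edges consume one symbol each, any accepting path on a word of length exactly $2n$ has the following shape: stay in $s$ for the first $i-1$ symbols (allowed on either $0$ or $1$), take the edge $s\to m_1$ on symbol $i$ (which forces $w_i = 1$), walk through $m_1,\ldots,m_n$ on symbols $i+1,\ldots,i+n-1$ (allowed on $0$ or $1$), take the edge $m_n \to f$ on symbol $i+n$ (which forces $w_{i+n} = 1$), and then sit at $f$ for the remaining $n - i$ symbols. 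The constraint $i + n \leq 2n$ gives $i \in \{1,\ldots,n\}$.

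Next I would translate this back into the coordinates of $x$ and $y$. Since $w_i = x_i$ and $w_{i+n} = y_i$ for $i \in \{1,\ldots,n\}$, each index $i$ yields at most one accepting path, and that path exists iff $x_i = 1 = y_i$, i.e.\ iff $x_i y_i = 1$. Moreover, distinct indices $i$ give distinct paths because the position of the $s\to m_1$ transition differs. Hence
\begin{equation}
|\mathrm{apaths}(xy)| \;=\; \sum_{i=1}^n x_i y_i,
\end{equation}
where the sum is taken over the integers.

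Finally, applying Definition~\ref{def:WFSA} directly: $xy \in L_{\text{WA}^\text{prod}_n}$ iff $|\mathrm{apaths}(xy)| \equiv 1 \pmod 2$, which by the displayed equality is iff $\sum_{i=1}^n x_i y_i \equiv 1 \pmod 2$, proving Equation~\ref{eq:propWAprod}. The only mildly delicate step is the path enumeration, and in particular arguing that every accepting path has exactly this form (no shortcut exists because neither $s$ nor $f$ connects to the chain interior and the chain has no back-edges); once that rigidity is established the rest is bookkeeping.
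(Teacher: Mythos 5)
Your proof is correct and follows essentially the same route as the paper: both arguments establish that every accepting path is rigidly of the form $s^{i-1}\,m_1\cdots m_n\,f^{n-i}$, forcing $1$s at positions $i$ and $i+n$ of the input, so that accepting paths are in bijection with indices $i$ where $x_i = y_i = 1$, giving $|\mathrm{apaths}(xy)| = \sum_{i=1}^n x_i y_i$ and concluding via the mod-$2$ acceptance criterion of Definition~\ref{def:WFSA}. Your write-up is if anything slightly more explicit than the paper's about why no other path shapes can occur.
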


\begin{proof}
Any accepting path in $\text{WA}^\text{prod}_n$ must have the form $p \in s^*m_1m_2...m_nf^*$.
A path $p$ is caused by transitions corresponding to a word of the pattern: 
\begin{equation}
\{0,1\}^*1\{0,1\}^{n-1}1\{0,1\}^*
\label{eq:pattern}
\end{equation}

\noindent i.e., by a word that has two $1$s that are exactly $n$ letters apart.

Now, let us count the number of accepting paths for any $xy$.
The word $xy$ matches the pattern in Equation~\ref{eq:pattern} for each $1 \leq i \leq n$ such that $x_i = y_i = 1$ and for no other: i.e., only for the partition $\{0,1\}^{i-1}1\{0,1\}^{n-1}1\{0,1\}^{n-i}$.
Each of these partitions of $xy$ corresponds to a unique path, so the total number of accepting paths is $\sum_{i = 1}^n x_iy_i$ and
Equation~\ref{eq:propWAprod} follows from the acceptance criteria of WAs in Definition~\ref{def:WFSA}.
\end{proof}

\subsection{NFAs can be exponentially smaller than WA2s}
Unfortunately, there are also cases where the opposite happens and we do not have a small WA2 while a small NFA exists:

\begin{thm}
	There exists a family of regular languages $\{L_n\}$ such that 
	$|\mathrm{WA2}(L_n)| \in 2^{\Omega({|\mathrm{NFA}(L_n)|})}$
	\label{thm:smallNFA}
\end{thm}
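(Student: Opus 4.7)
The approach is to exhibit a language family whose Hankel matrix has full $\mathbb{Z}_2$-rank on a polynomial-sized window while an NFA can certify membership using only $O(n)$ states. The candidate is
$$L_n = \{w \in \{0,1\}^* : \exists i \in [1, |w| - n] \text{ such that } w_i \neq w_{i+n}\},$$
so $w \in L_n$ iff some two symbols of $w$ at distance $n$ disagree.

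For the NFA upper bound I would build a machine with $2n+2$ states. The start state $s$ carries a self-loop on $\{0,1\}$ so the machine can slide to a non-deterministically chosen commit position; from $s$ on reading $b \in \{0,1\}$ it enters $c_1^b$; a chain $c_k^b \to c_{k+1}^b$ on any symbol for $k < n$ carries the committed bit $b$ through the next $n-1$ reads; from $c_n^b$ it transitions to an accept state $a$ only on reading $1-b$ (the mismatch check); and $a$ carries a self-loop on $\{0,1\}$ to absorb the remaining suffix. A routine check shows that an accepting path exists exactly when some $i$ with $w_i \neq w_{i+n}$ exists, so $|\mathrm{NFA}(L_n)| \leq 2n + 2$.

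For the WA2 lower bound I would apply Proposition~\ref{thm:WAsize} to the restriction $H_{L_n}|_n$ on $\{0,1\}^n \times \{0,1\}^n$. For $u,v \in \{0,1\}^n$ the string $uv$ has length $2n$, so the only candidate witness positions are $i \in [1,n]$, at which $(uv)_i = u_i$ and $(uv)_{i+n} = v_i$. Hence $H_{L_n}(u,v) = 1$ iff $u \neq v$: the restricted Hankel is exactly the not-equal matrix $J + I$ over $\mathbb{Z}_2$, with $J$ the all-ones matrix and $I$ the identity indexed by $\{0,1\}^n$. A short calculation gives $(J+I)^2 = J^2 + 2J + I \equiv I \pmod 2$ (using $J^2 = 2^n J \equiv 0$ and $2J \equiv 0$ for $n \geq 1$), so $J+I$ is self-inverse over $\mathbb{Z}_2$ and therefore has full rank $2^n$. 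Since the rank of any submatrix lower-bounds the rank of the full Hankel, $|\mathrm{WA2}(L_n)| \geq 2^n$.

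Combining the two bounds gives $|\mathrm{WA2}(L_n)| \geq 2^n \geq 2^{(|\mathrm{NFA}(L_n)|-2)/2}$, so $|\mathrm{WA2}(L_n)| \in 2^{\Omega(|\mathrm{NFA}(L_n)|)}$ as required. The main subtlety I anticipate is keeping the NFA genuinely linear in $n$: if one were to insist on strings of exact length $2n$, the NFA appears to need an $O(n^2)$ product of a commit-offset counter with a global length counter, and the separation would collapse from $2^{\Omega(|\mathrm{NFA}|)}$ to $2^{\Omega(\sqrt{|\mathrm{NFA}|})}$. By allowing $L_n$ to be closed under pre- and post-pending arbitrary $\{0,1\}^*$, the pre- and post-witness phases collapse into the two self-loops at $s$ and $a$, preserving both the $O(n)$ state count and the fact that the not-equal matrix appears verbatim as the length-$n$ restricted Hankel.
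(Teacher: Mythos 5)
Your proposal is correct and follows essentially the same route as the paper: the same not-equal language on the length-$n$ kernel, the same $2(n+1)$-state NFA with a nondeterministic commit position and two branches checking a mismatch at distance $n$, and the same full-rank Hankel argument via Proposition~\ref{thm:WAsize}. If anything, your explicit verification that $(J+I)^2 \equiv I \pmod 2$ is more careful than the paper, which merely asserts that the restricted Hankel matrix ``clearly'' has full rank (the rank must be taken over $\mathbb{Z}_2$, not $\mathbb{R}$, so this check is genuinely worth making).
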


\begin{proof}
	For this, consider a language family where for $u,v \in \{0,1\}^n$ $uv \in L_n$ if and only if $u \neq v$. If we look at the Hankel matrix of $L_n$ restricted to columns and rows of length $n$ then it is a matrix of all ones except with zeros on the diagonal. Clearly, this matrix has full rank, so by Theorem~\ref{thm:WAsize} $|WA2(L_n)| \geq 2^n$. 
	
	On the other hand, an NFA of size $2(n + 1)$ is given that recognizes a language consistent with $L_n$ in Figure~\ref{fig:smallNFA}.
	Notice that any accepting path in this NFA can only have been caused by a word of the pattern $\{0,1\}^*0\{0,1\}^{n-1}1\{0,1\}^*$ (left branch) or $\{0,1\}^*1\{0,1\}^{n-1}0\{0,1\}^*$ (right branch).
	When we restrict this to words $xy$ with $x,y \in \{0,1\}^n$, we see that one of the patterns is realized only if there is some $1 \leq i \leq n$ such that $x_i \neq y_i$.
\end{proof}

\begin{figure}
	\center

\begin{tikzpicture}[->,>=stealth',shorten >=1pt,auto,node distance=1.5cm, semithick]
  \tikzstyle{every state}=[fill=white,draw=black,text=black]

  \node[initial above,state](S){$s$};
  \node[state](L1)[left of=S]{$l_1$};
  \node[state](L2)[below of=L1]{$l_2$};
  \node (Ldot)[below of=L2]{$\vdots$};
  \node[state](Ln)[below of=Ldot]{$l_n$};
  \node[state](R1)[right of=S]{$r_1$};
  \node[state](R2)[below of=R1]{$r_2$};
  \node (Rdot)[below of=R2]{$\vdots$};
  \node[state](Rn)[below of=Rdot]{$r_n$};
  \node[state,accepting](F)[right of=Ln]{$f$};

  \path (S) edge              node {0} (L1)
  			edge	node {1} (R1)
            edge [loop below] node {0,1} (S)
        (L1) edge node {0,1} (L2)
        (L2) edge node {0,1} (Ldot)
        (Ldot) edge node {0,1} (Ln)
        (Ln) edge node {1} (F)
        (R1) edge node {0,1} (R2)
        (R2) edge node {0,1} (Rdot)
        (Rdot) edge node {0,1} (Rn)
        (Rn) edge node {0} (F)
        (F) edge [loop above] node {0,1} (F)
            ;
\end{tikzpicture}
	
	\caption{A picture of the NFA used in the proof of Theorem~\ref{thm:smallNFA}}
	\label{fig:smallNFA}
\end{figure}
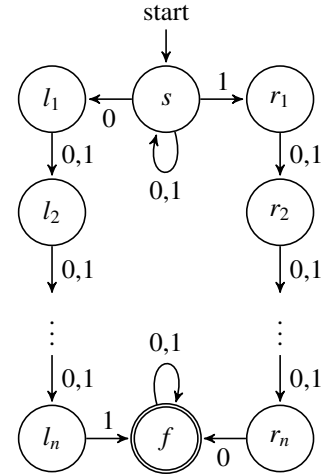

\section{Efficient active learning algorithm for weighted automata}
\label{sec:algo}

Deterministic finite state automata (DFAs) are not passive learnable: i.e., DFAs are known to be difficult to PAC-learn from randomly drawn labeled examples in any representation~\cite{KV94}. 
However, we can instead consider a model with active learning that instead of random labeled examples has the following two types of queries:
\begin{enumerate}
\item for any string $x \in \Sigma^*$ we can do a membership query to get $f(x)$. This is the active learning component, since the algorithm generates the query to ask, and 
\item given a candidate weighted automaton $A$, we can ask if it is correct with a counter-example query. If $A$ computes $f$ (i.e. $f_A = f$) then the teacher will say ``CORRECT", otherwise the teacher will return a counter-example $z$ such that $f_A(z) \neq f(z)$. If a teacher is unavailable then this can alternatively be replaced by random sampling if we want a PAC-like model, and would correspond to the non-active part of learning. 
\end{enumerate}
This is Angluin's queries and counter-examples or `minimal adequate teacher' (MAT) model~\cite{A87}.
\textcite{A87} famously showed that -- in the MAT model -- regular languages are efficiently learnable in the size of their minimal DFA representation. 
Later, \textcite{S91} improved the efficiency of Angluin's algorithm for learning DFAs.
In this section, we show how to adapt the Angluin-Schapire algorithm from learning DFAs to learning WAs over any field $\mathbb{F}$. 

For the rest of the section, suppose we are trying to learn an unknown function $f: \Sigma^* \rightarrow \mathbb{F}$ with Hankel matrix $H: \Sigma^* \times \Sigma^* \rightarrow \mathbb{F}$. 

\subsection{Initialization}

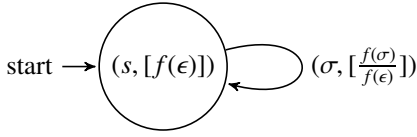
\begin{figure}
\centering
\begin{tikzpicture}[->,>=stealth',shorten >=1pt,auto,node distance=1.5cm, semithick]
  \tikzstyle{every state}=[fill=white,draw=black,text=black]

  \node[initial,state](S){$(s,[f(\epsilon)])$};

  \path (S) edge [loop right] node {$(\sigma, [\frac{f(\sigma)}{f(\epsilon)}])$} (S);
\end{tikzpicture}
	\caption{Initial weighted automaton. There is a single state that is initial and outputs its weight times $f(\epsilon)$. There is a self-loop for each letter $\sigma \in \Sigma^*$ weighted by $\frac{f(\sigma)}{f(\epsilon)}$. Note that we are using the WLOG assumption that $f(\epsilon) \neq 0$.}
	\label{fig:initial}
\end{figure}

At all times, our algorithm will keep track of two finite sets $S,E \subseteq \Sigma^*$ of equal size ($|S| = |E|$). 
$S$ will be prefix closed and we will call its elements states. 

For convenience, we will define a function $F: S \rightarrow \mathbb{F}^E$. 
If we view $F$ as a matrix, then it is a restriction of $H$ to $S$ and $E$, i.e. $F = H(S,E)$ or more explicitly for $s \in S$ and $e \in E$, $F(s,e) = f(se)$. 
Our algorithm will ensure that $F$ is full rank, i.e. $\mathrm{rank}_{\mathbb{F}}(F) = |S|$.

We will start with $S = E = \{\epsilon\}$ and without loss of generality assume that $f(\epsilon) \neq 0$ (if it is equal zero then just replace $f$ by $f + 1$, learn that, and then subtract $1$ from each value in the final/measurement state). 
See Figure~\ref{fig:initial} for the initial automaton.
This initialization requires one membership query to learn $f(\epsilon)$.

\subsection{Automaton corresponding to matrix $F$}

For each $\sigma \in \Sigma$, consider $F^\sigma: S \rightarrow \mathbb{F}^E$ where $F^\sigma(s,e) = f(s\sigma e)$. 
Since $F$ has full rank, we know that its columns form a basis for $\mathbb{F}^E$.
Thus, every other vector $F^\sigma(s) \in \mathbb{F}^E$ can be expressed as some linear combination of the $F(s')$ for $s' \in S$.
Define $T^\sigma: S \times S \rightarrow \mathbb{F}$ as the matrix that stores the coefficients of these linear combinations: i.e., define $T^\sigma$ such that for every $s \in S$ we have $F^\sigma(s) = \sum_{s' \in S} T^\sigma_{s,s'} F(s')$.

This allows us to define the corresponding weighted automaton over $\mathbb{F}$ (see Definition~\ref{def:WAgen}) on state space $\mathbb{F}^S$. 
Let the weighted automaton $T$ have...
\begin{itemize}
\item initial state $\alpha$ such that $\alpha(\epsilon) = 1$ and $\alpha(s) = 0$ if $s \neq \epsilon$, 
\item final/measurement state $\omega = F(\cdot,\epsilon)$ (i.e., the row of $F$ corresponding to $\epsilon \in E$, and 
\item transition matrices $T^\sigma$.
\end{itemize}

\subsection{Learning from counter-example query}

Now, suppose we tried this automaton $T$ and our teacher returned a counter-example $z$. 
We will use this counter-example to find strings to extend $S$ and $E$ and thus increase the rank of our matrix $F$. 
Now for each $1 \leq i \leq |z| + 1$ consider the partitions $z = z_{< i}\sigma_i z_{> i}$. 
For each $z_{< i}$ define $Z_i: S \rightarrow \mathbb{F}$ to be the state of our candidate automaton when we run it on $z_{<i}$:

\begin{equation}
Z_i = \alpha^T T^{z_1} T^{z_2} \ldots T^{z_{i - 1}}.
\end{equation}

Let $f_i = \sum_{s \in S} Z_i(s)f(s\sigma_i z_{> i})$. 
From our definition, we know that $f_1 = f(z) \neq f_T(z) = f_{|z| + 1}$, thus as we increase $i$ there must be some point $k$ where $f_k \neq f_{k + 1}$. Find this point by using binary search on $i$. 
This requires at most $|S|\lceil \log (|z|) \rceil$ membership queries to $f$.

Let us write out $f_{k + 1}$:

\begin{eqnarray}
	f_{k + 1} &=& \sum_{s' \in S}Z_{k + 1}(s') f(s'z_{> k}) \\ &=& \sum_{s,s' \in S} T^{\sigma_k}_{s,s'} Z_k(s) f(s'z_{> k}) \label{eq:longkplusone}
\end{eqnarray}

Now, proceed by contradiction: if $\forall s \in S$ we have
$f(s\sigma_kz_{> k}) = \sum_{s' \in S} T^{\sigma_k}(s,s')f(s'z_{> k})$ then
\begin{eqnarray}
f_k &=& \sum_{s \in S} Z_k(s) f(s\sigma_k z_{> k})\\ &=& \sum_{s \in S} Z_k(s) \sum_{s' \in S} T^{\sigma_k}_{s,s'}f(s'y_k) = f_{k + 1}
\end{eqnarray}
where the last equality follows from Equation~\ref{eq:longkplusone} and contradicts $f_k \neq f_{k + 1}$. Thus, there must be some $s^* \in S$ such that
$f(s^*\sigma_kz_{> k}) \neq  \sum_{s' \in S} T^{\sigma_k}_{s^*,s'}f(s'z_{> k})$. 

Now, consider an $s\sigma \in S$ then
\begin{equation}
F(s\sigma) = F_{\sigma}(s) =
\sum_{s'}T^{\sigma}_{s,s'}F(s')
\end{equation}
\noindent but since the $F(s)$ are linearly independent, we must have that $T^\sigma_{s,s\sigma} = 1$ and for $s' \neq s\sigma$ we must have $T^\sigma_{s,s'} = 0$. 
Plugging this into our contradiction assumption, we see that for $s\sigma_k \in S$ we have $\sum_{s' \in S} T^{\sigma_k}_{s,s'}f(s'z_{> k}) = f(s\sigma_k z_{> k})$. 
Therefore, our $s^*\sigma_k \not\in S$. 
Now, we can add $s^*\sigma_k$ to $S$ and $z_{> k}$ to $E$ to get a new linearly independent row and column and increase the rank of our matrix by 1.

\subsection{Termination}

Since our candidate automaton agrees with $f$ on every value in $F$, it must be that the real weighted automaton corresponding to $f$ must have more states than $\mathrm{rank}(F)$. 
At every counter-example query, we increase our rank by one, so if our world $f$ is represented by a minimum weighted automaton with $n$ states then after $n - 1$ counter-example queries we must have $rank(F) = rank(H_f)$.
Since our automaton agrees with $f$ on every value in $F$, the $n$th counter-example query gets it ``CORRECT".
If $m$ is the length of the longest-counterexample then the total number of membership queries is less than $1 + {n \choose  2} \lceil \log m \rceil$.

\section{Discussion and Conclusion}

As far as we know, this is the first time it has been show that weighted automata (WAs) can be exponentially smaller than NFAs. 
Together with the learning algorithm, this produces a somewhat surprising result: weighted automata are structured enough that even though they are compact, they are still efficiently learnable.
This also means that some languages where the minimal DFAs and NFAs are exponentially bigger than the minimal WAs can be learned much faster using the WA representation. 

This is not the case for NFAs.
Although several algorithms have been developed for learning NFAs in the minimum adequate teacher model~\cite{Y94,BHKL09}, the results for NFAs differ from the case of WAs in two fundamental ways:
\begin{enumerate}
    \item The algorithms for learning NFAs are not guaranteed to return a minimal NFA that recognizes the language.
    In fact, they return a special kind of NFAs called residual finite state automata (RFSAs)~\cite{DLT04,BHKL09}. 
    These RFSAs are always the same size or larger than NFAs and in some cases are exponentially larger than the minimal NFA that recognizes a language~\cite{DLT04}.
    \item The number of queries required for learning these RFSAs is not polynomial in the size of the minimal NFA nor the minimal RFSA, but only polynomial in the size of the minimal DFA.
    So although RFSAs can be exponentially more compact that DFAs, this does not necessarily provide a speed-up for learning those RFSAs.
    In fact, there are hardness results suggesting that one cannot learn NFAs or RFSAs in a number of queries that is polynomial in the size of smallest NFA recognized the language~\cite{AK95} nor polynomial in the size of smallest RFSA recognizing the language~\cite{DLT04}.
\end{enumerate}

\noindent In contrast, we show that a minimal WA can be learned in a number of queries that is polynomial in the size of the minimal WA corresponding to that unknown function.
Since WAs are always smaller than DFAs and sometimes exponentially smaller, that means that learning WA2s replaces the standard Angluin-Schapire algorithm~\cite{A87,S91} for learning regular languages. 
In the cases where WAs are the same size as DFAs, we can achieve the same performance, and in the cases in which WAs are more compact, we provide exponential savings in terms of queries used.

\section*{Acknowledgements}

We are indebted to helpful discussion with Borja Balle and Doina Precup.
The paper also benefited from the feedback of several anonymous reviewers.
The work began when A. Kaznatcheev was at the School of Computer Science, McGill University 
and completed thanks to the generous support of a James S. McDonnell Foundation Postdoctoral Fellowship for Understanding Dynamic \& Multi-scale Systems.
P. Panangaden was supported by NSERC (Canada).

\section*{References}

\bibliographystyle{plainnat}
\bibliography{WAref}

\end{document}